\newtheorem{thm}{Theorem}[section]
\newtheorem{lem}[thm]{Lemma}
\theoremstyle{definition}
\begin{document}
\date{}
\title{\bf  Amending coherence-breaking channels via unitary operations}
\author{ \\Long-Mei Yang$^1$, Bin Chen$^2$, Tao Li$^3$,  Shao-Ming Fei$^1$, Zhi-Xi Wang$^1$
\\
{\footnotesize $^1$School of Mathematical Sciences, Capital Normal University, Beijing 100048, China}\\
{\footnotesize $^2$School of Mathematical Sciences, Tianjin Normal University, Tianjin 300387, China}\\
{\footnotesize $^3$School of Science, Beijing Technology and Business University, Beijing 100048, China}}
\maketitle
\begin{abstract}
The coherence-breaking channels play a significant role in quantum information theory.
We study the coherence-breaking channels and give a method to amend the
coherence-breaking channels by applying unitary operations.
For given incoherent channel $\Phi$, we give necessary and sufficient conditions for the channel to be a coherence-breaking channel and amend it via unitary operations.
For qubit incoherent channels  $\Phi$ that are not coherence-breaking ones, we consider the mapping $\Phi\circ\Phi$ and present the conditions for coherence-breaking and channel amendment as well.
\end{abstract}

{\bf \em Keywords:} coherence-breaking channel, incoherent channel, coherence-breaking index

\section{Introduction}
Originating from quantum superposition, quantum coherence has been a cornerstone of quantum information theory.
It is of fundamental importance in quantum information processing such as quantum reference frames \cite{SDB,IM,RW},
transport in biological systems \cite{SL,NL} and quantum thermodynamics \cite{JA,PC}.
In recent years, the resource theories of quantum coherence have been rapidly developed \cite{TB,BY,EC}.
The free operations, the free states  and the resource states are three basic ingredients in a quantum resource theory.
In the resource theory of quantum coherence, the free states are the incoherent states whose
density matrices are diagonal under the reference basis.
The free operations are the incoherent operations $\Phi$.

As a hot topic, channel properties related to theoretical and experimental work has attracted many attention \cite{coherent,QIP,Long1,Long2,evolution,Long3}.
An important aspect in the study of channel properties is related to the evolution under the action of a channel.
For example, in the entanglement resource theory, entanglement-breaking channels (EBCs) have been characterized completely \cite{EBCs,QEB,EB}.
In Ref. \cite{A.C}, Cuevas {\it et al.} amended EBCs by using unitary operations.
Of special interest to us is the coherence-breaking channel and its amendment.
Recently, in Ref. \cite{CBCs}, Bu {\it et al.} introduced two kinds of coherence-breaking channels (CBCs).
In addition, they devoted to the coherence-breaking indices of incoherent quantum channels and presented various examples to elucidate this concept.
For a given quantum channel $\Phi$, if $\Phi^k=\underbrace{\Phi\circ\cdots\circ\Phi}\limits_{k \  {\rm times}}$ is a CBC while $\Phi^{k-1}$ is not,
then the coherence-breaking index $n(\Phi)$ of $\Phi$ is $k$, i.e., $n(\Phi)=k$.

In this paper, we investigate the structure of quantum channels and give necessary and sufficient conditions for a channel to be CBC,
in particular, for qubit quantum channels $\Phi$ with $n(\Phi)=2$.
Furthermore, we consider the amendment of the corresponding CBCs via unitary operations. Detailed examples related to qubit CBCs's amendment are presented.

\section{Preliminaries}
We first introduce some relevant basic concepts.
For a $d$-dimensional quantum system and a fixed reference basis $\{|i\rangle\}$,
the $l_1$ norm of coherence $\mathcal{C}_{l_1}$ of a state $\rho$ is given by
$\mathcal{C}_{l_1}(\rho)=\sum\limits_{i\neq j}|\langle i|\rho|j\rangle|$.
Any qubit state $\rho$ can be written as $\rho=\frac{1}{2}(I+\overrightarrow{r}\cdot\overrightarrow{\mathbf{\sigma}})=
\frac{1}{2}(I+r_x\sigma_x+r_y\sigma_y+r_z\sigma_z)$,
where $\sigma_x,\sigma_y$ and $\sigma_z$ are Pauli matrices,
$\vec{r}$ is a 3-dimensional Bloch vector with $|\overrightarrow{r}|\leq 1$.

A quantum channel $\Phi$ is a linear completely positive and trace preserving (CPTP) map \cite{Nielson}.
The action of a qubit quantum channel $\Phi$ on a state $\rho$ can be expressed by
a real $4\times 4$ matrix $\left(
                                                                \begin{array}{cc}
                                                                  1 & \mathbf{0}_{1\times 3} \\
                                                                  \overrightarrow{n} & M \\
                                                                \end{array}
                                                              \right)$
which transforms the column vector $(1,r_x,r_y,r_z)^T$ to the corresponding one of $\Phi(\rho)$, where $\mathbf{0}_{1\times 3}=(0,0,0)$,
$\overrightarrow{n}$ is a real 3-dimensional column vector, $M$ is a $3\times 3$ real matrix, $T$ stands for transpose.
Obviously, the action of a qubit quantum channel $\Phi$ is completely characterized by $(M,\overrightarrow{n})$
and the $k$-th iterated channel $\Phi^k$ is characterized by $(M^k,(\sum\limits_{i=0}^{k-1}M^i)\overrightarrow{n})$.

A non-coherence-generating channel (NC) $\tilde{\Phi}$  is a CPTP map from an incoherent state to an incoherent state:
$\tilde{\Phi}(\mathcal{I})\subset \mathcal{I}$, where $\mathcal{I}$ denotes the set of incoherent states \cite{NC}.
Any quantum channel $\Phi$ is called an incoherent channel if there exists a Kraus decomposition
$\Phi(\cdot)=\Sigma_iK_i(\cdot)K_i^{\dagger}$ such that $\rho_i=\frac{K_i(\rho)K_i^{\dagger}}{{\rm Tr}(K_i(\rho)K_i^{\dagger})}$
is incoherent for any incoherent state $\rho$.
We call incoherent channel $\Phi$ a coherence-breaking channel (CBC) if $\Phi(\rho)$  is an incoherent state for any state $\rho$ \cite{CBCs}.

Let $\Phi$ be an incoherent channel, the coherence-breaking index $n(\Phi)$ of $\Phi$ is defined as \cite{CBCs}
\begin{equation}
n(\Phi)=\min\{k\geq1: \text{$\Phi^{k}$ is a coherence-breaking channel}\}.
\end{equation}

A rank-$2$ qubit channel is an NC if and only if it has the Kraus decomposition either as \cite{NC}
\begin{equation}\label{eqnc1}
\Phi^{(1)}(\cdot)=E_1^{(1)}(\cdot)E_1^{(1)\dagger}+E_2^{(1)}(\cdot)E_2^{(1)\dagger}
\end{equation}
with
\begin{equation}\label{eqNC1}
E_1^{(1)}=\left(
        \begin{array}{cc}
          e^{i\eta}\cos\theta\cos\phi & 0 \\
          -\sin\theta\sin\phi & e^{i\xi}\cos\phi \\
        \end{array}
      \right), \ \ \
     E_2^{(1)}=\left(
\begin{array}{cc}
                  \sin\theta\cos\phi & e^{i\xi}\sin\phi \\
                  e^{-i\eta}\cos\theta\sin\phi & 0 \\
                \end{array}
              \right)
\end{equation}
or as
\begin{equation}\label{eqnc2}
\Phi^{(2)}(\cdot)=E_1^{(2)}(\cdot)E_1^{(2)\dagger}+E_2^{(2)}(\cdot)E_2^{(2)\dagger}
\end{equation}
 with
\begin{equation}\label{eqNC2}
E_1^{(2)}=\left(
        \begin{array}{cc}
         \cos\theta & 0 \\
         0 & e^{i\xi}\cos\phi \\
        \end{array}
      \right),  \ \ \
      E_2^{(2)}=\left(
                \begin{array}{cc}
                  0 & \sin\phi \\
                  e^{i\xi}\sin\theta & 0 \\
                \end{array}
              \right),
\end{equation}
where $\theta,\phi,\xi$ and $\eta$ are all real numbers.
Here we note that $\Phi^{(1)}$ is not an incoherent channel unless
$\sin\theta\cos\theta\sin\phi\cos\phi=0$ and $\Phi^{(2)}$ is an incoherent channel.

\section{Coherence-breaking channels}
We first give a way to characterize the structure of $d$-dimensional (qudit) quantum channels and present the necessary and sufficient condition for the channels to be CBCs.
For qubit quantum channels, Bu {\it et al.} have already provided a necessary and sufficient condition to determine
whether a given channel is a CBC or not.
Below we give new and more specific necessary  and sufficient conditions to justify a given channel based on the results derived by
Hu {\it et al.} \cite{NC}.

Any $d$-dimensional state $\rho$ can be expressed as \cite{RAB}:
\begin{equation*}
\rho=\frac{I}{d}+\frac{1}{2}\sum\limits_{j=0}^{d-2}\sum\limits_{k=j+1}^{d-1}(b_s^{jk}\sigma_s^{jk}+b_a^{jk}\sigma_a^{jk})+
\frac{1}{2}\sum\limits_{l=1}^{d-1}b^{l}\sigma^{l}
\end{equation*}
with $b_{s(a)}^{jk}={\rm tr} (\rho\sigma_{s(a)}^{jk})$, $b^{l}={\rm tr} (\rho\sigma^{l})$, $\sigma_s^{jk}=|j\rangle\langle k|+|k\rangle\langle j|$,
$\sigma_a^{jk}=-i|j\rangle\langle k|+i|k\rangle\langle j|$ and
$\sigma^{l}=\sqrt{\frac{2}{l(l+1)}}(\sum\limits_{j=0}^{l-1}|j\rangle\langle j|-l|l\rangle\langle l|)$.
Then the state $\rho$ can be rewritten as $\rho=\frac{I}{d}+\vec{x}\cdot\vec{X}$ with
$\vec{x}=(b_s^{01},\ b_a^{01},\ldots,\\ b_s^{d-2,d-1},\ b_a^{d-2,d-1},\ b^1,\ldots, b^{d-1})^{T}$
and $\vec{X}=(\sigma_s^{01},\ \sigma_a^{01},\ldots,\sigma_s^{d-2,d-1}, \sigma_a^{d-2,d-1},\ \sigma^1,\ldots,\sigma^{d-1})^{T}$.
Thus, the general action of a $d$-dimensional quantum channel $\Phi$ on a qudit state $\rho$ can be expressed by a real $d^2\times d^2$ matrix
$\left(
        \begin{array}{cc}
          p & \vec{m}^{T} \\
          \vec{n} & M \\
        \end{array}
      \right)$, where $\vec{m}, \ \vec{n}$ are $(d^2-1)\times 1$ column vectors and $M$ is a $(d^2-1)\times(d^2-1)$ matrix.
Then $\Phi=\left(
        \begin{array}{cc}
          1 & \vec{0}^{T} \\
          \vec{n} & M \\
        \end{array}
      \right)$
since $\Phi$ maps $\rho=\frac{I}{d}+\vec{x}\cdot\vec{X}$ to $\Phi(\rho)=\frac{I}{d}+\vec{x^{\prime}}\cdot\vec{X^{\prime}}$,
where $\vec{x^{\prime}}=(b_s^{\prime01},\ b_a^{\prime01},\ldots,b_s^{\prime d-2,d-1},\ b_a^{\prime d-2,d-1},\ b^{\prime1},
\ldots,b^{\prime d-1})^{T}$
and $\vec{X^{\prime}}=(\sigma_s^{\prime01},\ \sigma_a^{\prime01},\ldots,\sigma_s^{\prime d-2,d-1}, \sigma_a^{\prime d-2,d-1},\ \sigma^{\prime1},\\
\ldots,\sigma^{\prime d-1})^{T}$.

\begin{lem}\label{qudits}
A $d$-dimensional quantum channel $\Phi$ represented by $(M,\vec{n})$ is a CBC if and only if $M$ and $\vec{n}$
have the following form:
\begin{equation}\label{qudit}
M=\left(
        \begin{array}{cccc}
          0 & 0 & \cdots & 0\\
          \vdots & \vdots & \vdots & \vdots\\
          0 & 0 & \cdots & 0\\
          M_{d^2-d+1,1} & M_{d^2-d+1,2} & \cdots & M_{d^2-d+1,d^2-1}\\
          \vdots & \vdots & \vdots & \vdots\\
          M_{d^2-1,1} & M_{d^2-1,2} & \cdots & M_{d^2-1,d^2-1}\\
        \end{array}
      \right), \ \ \ \
\vec{n}=\left(
        \begin{array}{c}
          0 \\
          \vdots\\
          0 \\
          n_{d^2-d+1}\\
          \vdots\\
          n_{d^2-1}\\
        \end{array}
      \right).
\end{equation}
\end{lem}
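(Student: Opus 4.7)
The plan is to translate the condition of being a CBC directly into a condition on the affine action of $\Phi$ in Bloch coordinates. From the preamble, any qudit state has the form $\rho = I/d + \vec{x}\cdot\vec{X}$, and it is incoherent precisely when all coefficients $b_s^{jk}$ and $b_a^{jk}$ of the off-diagonal generators $\sigma_s^{jk},\sigma_a^{jk}$ vanish, since only the last $d-1$ generators $\sigma^{1},\ldots,\sigma^{d-1}$ are diagonal in the reference basis. Under the chosen ordering of $\vec{X}$, this is equivalent to saying that the first $d^{2}-d$ entries of $\vec{x}$ are zero.

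Next I would read off the action of $\Phi$ in these coordinates. Since $\Phi$ is represented by $\left(\begin{smallmatrix} 1 & \vec{0}^{T} \\ \vec{n} & M \end{smallmatrix}\right)$ acting on $(1,\vec{x})^{T}$, the Bloch vector of $\Phi(\rho)$ is $\vec{x}\,'=M\vec{x}+\vec{n}$. Therefore $\Phi$ is a CBC iff for every admissible Bloch vector $\vec{x}$, the first $d^{2}-d$ entries of $M\vec{x}+\vec{n}$ vanish. The ``if'' direction is then immediate: with $M$ and $\vec{n}$ in the stated block form, those entries are identically zero on all input states, so every output is incoherent.

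For the ``only if'' direction I would use that the maximally mixed state $I/d$ (which corresponds to $\vec{x}=0$) lies in the interior of the state space, so there is an open ball of admissible Bloch vectors around $0$. Setting $\vec{x}=0$ in the affine constraint $(M\vec{x}+\vec{n})_{i}=0$ forces $n_{i}=0$ for $i=1,\ldots,d^{2}-d$, and then varying $\vec{x}$ along each coordinate axis within the ball forces the corresponding row of $M$ to vanish. This yields exactly the block structure in (\ref{qudit}).

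The only subtle point is the last step, namely ensuring that the admissible Bloch vectors span enough directions to pin down each row of $M$ independently; but this is automatic because the maximally mixed state is in the interior of the state space. Apart from this, the proof is essentially a direct reading-off of the incoherence condition in the chosen operator basis, so I do not expect any genuine obstacle beyond careful indexing between the ordering of $\vec{X}$ and the block partition of $(M,\vec{n})$.
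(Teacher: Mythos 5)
Your proposal is correct and follows essentially the same route as the paper: the paper's own proof is a one-line assertion that incoherence is equivalent to the vanishing of the $b_{s}^{jk},b_{a}^{jk}$ coordinates, from which the block form of $(M,\vec{n})$ "follows." You have simply filled in the details the paper omits, most usefully the interior-point argument (the open ball of admissible Bloch vectors around $\vec{x}=0$) that justifies the ``only if'' direction.
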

\begin{proof}
Obviously we can see that $b_{s,a}^{jk}=0$ for all $0\leq j,k \leq d-1$ if and only if $\rho\in\mathcal{I}$.
Thus, $M$ and $\vec{n}$ are of the above forms.
\end{proof}

For the case $d=2$, Lemma \ref{qudits} reduces to the Proposition 1 given in \cite{CBCs}.
For qubit quantum channels, we give below some more specific conditions to justify whether the channel is a CBC or not.

\begin{lem}\label{Amend1}
Let $\Phi$ be an incoherent channel defined by \eqref{eqnc1}, then $\Phi$ is a CBC if and only if $\cos\theta=0$.
\end{lem}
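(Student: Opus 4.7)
The plan is to exploit the fact that in dimension two an incoherent state is precisely a diagonal density matrix, so $\Phi^{(1)}$ is a CBC if and only if $[\Phi^{(1)}(\rho)]_{01}=0$ for every qubit state $\rho=\begin{pmatrix}a & c\\ \bar c & b\end{pmatrix}$. Rather than invoking the $(M,\vec n)$ characterisation from Lemma \ref{qudits}, I will compute $[\Phi^{(1)}(\rho)]_{01}=[E_1^{(1)}\rho E_1^{(1)\dagger}]_{01}+[E_2^{(1)}\rho E_2^{(1)\dagger}]_{01}$ directly from \eqref{eqNC1} and read off the conditions for identical vanishing in $a$ and $c$.

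The computation is made tractable by the zero entries $(E_1^{(1)})_{01}=0$ and $(E_2^{(1)})_{11}=0$, which collapse each quadruple sum to only two surviving terms. Expanding and collecting shows that both Kraus summands produce a contribution linear in $a$ with coefficient $\pm\cos\theta\sin\theta\sin\phi\cos\phi$, and these cancel exactly. What is left is
\begin{equation*}
[\Phi^{(1)}(\rho)]_{01}=e^{i\eta}\cos\theta\bigl[\,c\,e^{-i\xi}\cos^2\phi+\bar c\,e^{i\xi}\sin^2\phi\,\bigr],
\end{equation*}
which already exhibits $\cos\theta$ as an overall factor; hence $\cos\theta=0$ is clearly sufficient.

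For necessity I need to check that the bracketed factor cannot vanish identically in $c$ when $\cos\theta\neq 0$. Substituting $c=1$ and $c=i$ in turn gives the two equations $e^{-i\xi}\cos^2\phi+e^{i\xi}\sin^2\phi=0$ and $e^{-i\xi}\cos^2\phi-e^{i\xi}\sin^2\phi=0$; adding and subtracting them forces $\cos^2\phi=\sin^2\phi=0$, which is impossible. Thus $[\Phi^{(1)}(\rho)]_{01}\equiv 0$ if and only if $\cos\theta=0$, and the lemma follows. The only real obstacle is the careful bookkeeping in the two Kraus sums; the key observation that makes the proof clean is that every $a$-dependent contribution is proportional to $\sin\theta\sin\phi\cos\phi$ with opposite signs between the two Kraus operators, so it cancels out and the remainder factors through $\cos\theta$.
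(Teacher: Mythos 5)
Your proof is correct and follows essentially the same route as the paper: both compute the off-diagonal entry of $\Phi^{(1)}(\rho)$ directly from the Kraus operators, arrive at the same expression $e^{i\eta}\cos\theta\bigl(c\,e^{-i\xi}\cos^2\phi+\bar c\,e^{i\xi}\sin^2\phi\bigr)$ (the paper's $B$, with the $a$-dependent terms cancelling exactly as you observe), and obtain necessity by evaluating at particular off-diagonal entries (the paper takes $c=|c|e^{i\xi}$, which collapses the bracket to $|c|$, while you take two phases and eliminate). The only cosmetic point is that $c=1$ and $c=i$ are not admissible off-diagonal entries of a density matrix, but since the bracket is real-linear in $c$ you may rescale to $c=\tfrac12$ and $c=\tfrac{i}{2}$ without changing the argument.
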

\begin{proof}
 Any density operator acting on a two-dimensional quantum system can be generally written as
\begin{equation}\label{eqState}
\rho=\left(
        \begin{array}{cc}
          a & b \\
          b^* & 1-a \\
        \end{array}
      \right),
\end{equation}
where $|a|^2+|b|^2\leq1$.
Substituting \eqref{eqState} into \eqref{eqnc1}, we have
\begin{equation}
\Phi(\rho)=\left(
                       \begin{array}{cc}
                         A & B\\
                         B^* & 1-A\\
                       \end{array}
                     \right),
\end{equation}
where $A=a\cos^2\phi+(b^*e^{i\xi}+be^{-i\xi})\sin\theta\sin\phi\cos\phi+(1-a)\sin^2\phi$ and
$B=be^{i\eta-i\xi}\cos\theta\cos^2\phi+b^*e^{i\xi+i\eta}\cos\theta\sin^2\phi$.
Then we find that $\Phi(\rho)\in\mathcal{I}$ if and only if $B=be^{i\eta-i\xi}\cos\theta\cos^2\phi+b^*e^{i\xi+i\eta}\cos\theta\\
\sin^2\phi=0$ for arbitrary $b$.
Let $b=|b|e^{\beta}$, where $\beta\in [0,2\pi)$.
Thus, $B=0$ if and only if $e^{-i\eta}|b|\cos\theta \\ (e^{i\xi-i\beta}\cos^2\phi+e^{i\beta-i\xi}\sin^2\phi)=0$ for arbitrary $\beta$ and $|b|$.
Particularly, for $\beta=\xi$, we have that $\cos\theta=0$.
It is easy to see that $\Phi$ is an incoherent channel when $\cos\theta=0$.
\end{proof}

\begin{lem}
Let $\Phi$ be an incoherent channel defined by \eqref{eqnc2}, then $\Phi$ is a CBC if and only if $\sin\theta=0, \ \cos\phi=0$ or $\cos\theta=0, \ \sin\phi=0$.
\end{lem}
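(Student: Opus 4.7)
The plan is to mimic the approach used for Lemma \ref{Amend1}. I would start from a general qubit density matrix $\rho=\begin{pmatrix} a & b \\ b^* & 1-a \end{pmatrix}$ and compute $\Phi(\rho)=E_1^{(2)}\rho E_1^{(2)\dagger}+E_2^{(2)}\rho E_2^{(2)\dagger}$ by direct multiplication. Because of the very simple structure of the Kraus operators in \eqref{eqNC2} (each one is purely diagonal or purely anti-diagonal), the diagonal and off-diagonal parts decouple cleanly: the diagonal entries of $\Phi(\rho)$ depend only on $a$, while the off-diagonal entry is a linear combination of $b$ and $\bar b$ alone.

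The next step is to isolate the off-diagonal entry; a short calculation gives
\begin{equation*}
[\Phi(\rho)]_{12}=e^{-i\xi}\bigl(b\cos\theta\cos\phi+b^{*}\sin\theta\sin\phi\bigr).
\end{equation*}
The channel is a CBC iff this quantity vanishes for every complex $b$. Writing $b=b_1+ib_2$ and separating real and imaginary parts, this is equivalent to the two conditions $\cos\theta\cos\phi+\sin\theta\sin\phi=0$ and $\cos\theta\cos\phi-\sin\theta\sin\phi=0$, i.e.\ $\cos(\theta-\phi)=\cos(\theta+\phi)=0$. Adding and subtracting yields $\cos\theta\cos\phi=0$ and $\sin\theta\sin\phi=0$ simultaneously.

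From here the case analysis is immediate. The first equality forces $\cos\theta=0$ or $\cos\phi=0$, and the second forces $\sin\theta=0$ or $\sin\phi=0$. The combinations $\cos\theta=\sin\theta=0$ and $\cos\phi=\sin\phi=0$ are impossible, leaving exactly $\{\cos\theta=0,\ \sin\phi=0\}$ or $\{\sin\theta=0,\ \cos\phi=0\}$, which is precisely the stated condition. For the converse one just checks that in either of these two cases the off-diagonal expression above vanishes identically, and that $\Phi(\rho)$ remains a legitimate density matrix (nonnegative diagonal entries summing to one), hence an incoherent state.

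I do not anticipate any real obstacle: the algebra is short because $E_1^{(2)}$ and $E_2^{(2)}$ are essentially rank-one shifts, so the decoupling of diagonal and off-diagonal parts makes the analysis almost automatic. The only mildly delicate point is to remember that $b$ is complex and must be allowed to vary arbitrarily; treating $b$ and $b^{*}$ as independent (equivalently, testing with $b=1$ and $b=i$) is what produces both angle-sum and angle-difference conditions and ultimately yields the two disjoint solution branches rather than a single one.
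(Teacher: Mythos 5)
Your proposal is correct and follows essentially the same route as the paper: compute $\Phi(\rho)$ on a general qubit state, observe that the off-diagonal entry is $e^{\pm i\xi}(b\cos\theta\cos\phi+b^{*}\sin\theta\sin\phi)$, and demand that it vanish for all $b$, which forces $\cos(\theta-\phi)=\cos(\theta+\phi)=0$ and hence the two stated branches. The only (immaterial) differences are that you parametrize $b$ in Cartesian rather than polar form and spell out the final case analysis slightly more explicitly than the paper does.
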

\begin{proof}
Substituting \eqref{eqState} into \eqref{eqnc2}, we have
\begin{equation}
\Phi(\rho)=\left(
                       \begin{array}{cc}
                         C & D\\
                         D^* & 1-C\\
                       \end{array}
                     \right),
\end{equation}
where $C=a\cos^2\theta+(1-a)\sin^2\phi$ and $D=e^{i\xi}(b\cos\theta\cos\phi+b^*\sin\theta\sin\phi)$.
Then we find that $\Phi$ is a CBC if and only if $D=0$ for arbitrary $b$.
Let $b=|b|e^{\beta}$.
Then $\Phi$ is a CBC if and only if $|b|\sqrt{\cos^2\beta\cos^2(\theta-\phi)+\sin^2\beta\cos^2(\theta+\phi)}=0$
for any $|b|$ and $\beta$, which gives rise to that either $\sin\theta=0, \ \cos\phi=0$ or $\cos\theta=0, \ \sin\phi=0$.
\end{proof}


For qubit channels $\Phi$ with $n(\Phi)=2$, we have the following necessary and sufficient conditions for $\Phi^2$ to be CBC.

\begin{lem}\label{2(Phi)}
Let $\Phi$ be an incoherent channel defined by \eqref{eqnc1}. Then $n(\Phi)=2$ if and only if $\cos2\phi=0, \ \sin\theta=0$ and
$\sin\xi\sin\eta+\cos\xi\cos\eta=0$.
\end{lem}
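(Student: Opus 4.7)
The plan is to iterate the formula for $\Phi(\rho)$ obtained in the proof of Lemma \ref{Amend1}, extract the conditions for the off-diagonal entry of $\Phi^2(\rho)$ to vanish on every input, and then combine them with the incoherence requirement $\sin\theta\cos\theta\sin\phi\cos\phi=0$ together with the requirement that $\Phi$ itself is not already a CBC.

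First, starting from
\begin{equation*}
B=\cos\theta\bigl(be^{i(\eta-\xi)}\cos^2\phi+b^{*}e^{i(\eta+\xi)}\sin^2\phi\bigr)
\end{equation*}
for the off-diagonal entry of $\Phi(\rho)$, I apply the same transformation to $\Phi(\rho)$. This produces an off-diagonal entry of $\Phi^2(\rho)$ of the form
\begin{equation*}
B'=\cos^2\theta\Bigl[b\bigl(e^{2i(\eta-\xi)}\cos^4\phi+\sin^4\phi\bigr)+b^{*}\sin^2\phi\cos^2\phi\bigl(e^{2i\eta}+e^{2i\xi}\bigr)\Bigr].
\end{equation*}
Writing $b=|b|e^{i\beta}$ and demanding $B'=0$ for every $\beta$ and $|b|$, the coefficients of $b$ and $b^{*}$ must each vanish; this step is legitimate only once $\cos\theta\neq 0$, which is precisely the condition $n(\Phi)\geq 2$ guaranteed by Lemma \ref{Amend1}.

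Next I would analyse these two coefficient equations. The coefficient of $b$ gives $e^{2i(\eta-\xi)}\cos^4\phi+\sin^4\phi=0$; taking moduli forces $\cos^2\phi=\sin^2\phi$, i.e.\ $\cos 2\phi=0$, and the residual phase relation $e^{2i(\eta-\xi)}=-1$ is equivalent to $\cos(\eta-\xi)=0$, which is exactly $\sin\xi\sin\eta+\cos\xi\cos\eta=0$. Under $\cos 2\phi=0$ the coefficient of $b^{*}$ collapses to $e^{2i\eta}+e^{2i\xi}=0$, which reproduces $e^{2i(\eta-\xi)}=-1$, so the two equations yield the same single constraint on the phases.

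Finally, I invoke the incoherence hypothesis $\sin\theta\cos\theta\sin\phi\cos\phi=0$ recorded after \eqref{eqNC2}. Since $\cos 2\phi=0$ makes $\sin\phi\cos\phi\neq 0$ and we already have $\cos\theta\neq 0$, this forces $\sin\theta=0$, giving the last needed condition. The converse direction is a direct verification: assuming $\sin\theta=0$, $\cos 2\phi=0$, $\cos(\eta-\xi)=0$, one has $\cos\theta=\pm 1\neq 0$, so $\Phi$ is incoherent and is not a CBC by Lemma \ref{Amend1}, while $B'$ vanishes identically and hence $\Phi^2$ is a CBC. The main obstacle is the bookkeeping when forming $B'$ from $B$ and correctly separating the $b$- and $b^{*}$-coefficients; fortunately the two resulting conditions are compatible and collapse to one.
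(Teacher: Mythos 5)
Your proof is correct and arrives at the paper's conditions by an equivalent but differently packaged computation. The paper passes to the real affine (Bloch) representation $(M,\vec n)$ of \eqref{eqnc1}, writes out all the entries $M_{ij}$ explicitly, and imposes via Lemma \ref{qudits} that the first two rows of $M^2$ vanish while $\cos\theta\neq0$; this produces a system of quadratic equations ($M_{11}^2+M_{12}M_{21}=0$, $M_{12}(M_{11}+M_{22})=0$, etc.) which is then solved to give $\cos2\phi=0$ and $\cos(\xi-\eta)=0$. You instead iterate the complex off-diagonal map $b\mapsto\lambda b+\mu b^{*}$ with $\lambda=e^{i(\eta-\xi)}\cos\theta\cos^2\phi$ and $\mu=e^{i(\eta+\xi)}\cos\theta\sin^2\phi$, which is precisely the upper-left $2\times2$ block of $M$ written in complex coordinates; your expression for $B'$ is correct (the coefficient of $b$ is $\lambda^2+|\mu|^2$ and that of $b^{*}$ is $\mu(\lambda+\bar\lambda)$), and the modulus argument on the coefficient of $b$ extracts $\cos2\phi=0$ and $e^{2i(\eta-\xi)}=-1$ rather more transparently than solving the real quadratic system. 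Both arguments then finish identically: $\cos\theta\neq0$ from Lemma \ref{Amend1} guarantees $n(\Phi)\geq2$, and the incoherence constraint $\sin\theta\cos\theta\sin\phi\cos\phi=0$ combined with $\sin\phi\cos\phi\neq0$ forces $\sin\theta=0$. Your explicit converse check (that the three conditions yield an incoherent, non-coherence-breaking $\Phi$ with $\Phi^2$ a CBC) is a detail the paper leaves implicit, so your write-up is, if anything, slightly more complete.
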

\begin{proof}
It is easy to see that $\sin\theta\cos\theta\sin\phi\cos\phi=0$.
Comparing \eqref{eqnc1},\ \eqref{eqNC1} with \eqref{qudit} for $d=2$, we find that
$n_x=n_y=n_z=M_{13}=M_{23}=0$, $M_{11}=\cos\theta(\cos\eta\cos\xi+\sin\eta\sin\xi\cos2\phi)$, $M_{12}=\cos\theta(\sin\eta\cos\xi\cos2\phi-\cos\eta\sin\xi)$,
$M_{21}=\cos\theta(\cos\eta\sin\xi\cos2\phi-\sin\eta\cos\xi)$, $M_{22}=\cos\theta(\cos\eta\\
\cos\xi\cos2\phi+\sin\eta\sin\xi)$,
$M_{31}=2\sin\theta\sin\phi\cos\phi\cos\xi$, $M_{32}=-2\sin\theta\sin\phi\cos\phi\sin\xi$ and $M_{33}=\cos2\phi$.
Thus, $\Phi^2$ is a CBC if and only if $M_{11}^2+M_{12}M_{21}=0$, $M_{12}(M_{11}+M_{22})=0$, $M_{21}(M_{11}+M_{22})=0$,
$M_{12}M_{21}+M_{22}^2=0$ and $\cos\theta\neq0$ if and only if $\cos2\phi=0, \ \cos\theta\neq0$ and $\sin\xi\sin\eta+\cos\xi\cos\eta=0$.
Then we have $n(\Phi)=2$ if and only if $\cos2\phi=0, \ \sin\theta=0$ and $\sin\xi\sin\eta+\cos\xi\cos\eta=0$.
\end{proof}

\begin{lem}\label{CBC2}
Let $\Phi$ be an incoherent channel defined by \eqref{eqnc2}. Then $n(\Phi)=2$ if and only if one of the following three conditions holds:

{\rm (i)} $\cos2\xi=0$ and $\cos(\theta+\phi)=\cos(\theta-\phi)\neq0$;

{\rm (ii)} $\cos(\theta-\phi)=0, \ \cos\xi=0$ and $\cos(\theta+\phi)\neq0$;

{\rm (iii)} $\cos(\theta+\phi)=0, \ \cos\xi=0$ and $\cos(\theta-\phi)\neq0$.
\end{lem}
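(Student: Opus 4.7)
The approach is to compute the off-diagonal entry of $\Phi^{2}(\rho)$, require it to vanish for every $b$, and translate the resulting algebraic conditions into the trigonometric form stated in the lemma.

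Reusing $\Phi(\rho)=\left(\begin{smallmatrix}C & D\\ D^{*} & 1-C\end{smallmatrix}\right)$ with $D=e^{i\xi}(b\cos\theta\cos\phi+b^{*}\sin\theta\sin\phi)$ from the proof of the preceding lemma, the off-diagonal entry of $\Phi^{2}(\rho)$ is obtained by applying $\Phi$ once more with the substitutions $a\mapsto C$ and $b\mapsto D$, giving $D'=e^{i\xi}(D\cos\theta\cos\phi+D^{*}\sin\theta\sin\phi)$. Expanding and grouping in $b$ and $b^{*}$ yields
\[
D'=b\bigl(e^{2i\xi}\cos^{2}\theta\cos^{2}\phi+\sin^{2}\theta\sin^{2}\phi\bigr)+2b^{*}e^{i\xi}\cos\xi\,\cos\theta\cos\phi\sin\theta\sin\phi.
\]
Since $b$ and $b^{*}$ are real-linearly independent as $b$ varies over $\mathbb{C}$, $\Phi^{2}$ is a CBC if and only if both coefficients vanish.

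Separating the first coefficient into real and imaginary parts and reading off the second produces the three-equation system $\cos 2\xi\cos^{2}\theta\cos^{2}\phi+\sin^{2}\theta\sin^{2}\phi=0$, $\sin 2\xi\cos^{2}\theta\cos^{2}\phi=0$, and $\cos\xi\cos\theta\cos\phi\sin\theta\sin\phi=0$. I would then use the factorization $\cos^{2}\theta\cos^{2}\phi-\sin^{2}\theta\sin^{2}\phi=\cos(\theta+\phi)\cos(\theta-\phi)$ to rewrite the equations in terms of $\cos\xi$, $\cos 2\xi$, and $\cos(\theta\pm\phi)$, which are precisely the quantities appearing in the statement.

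The remainder is a careful case analysis, splitting on whether $\cos\theta\cos\phi$ or $\sin\theta\sin\phi$ vanishes and on the values of $\sin 2\xi$ and $\cos\xi$. In each branch I would solve for the surviving relations and discard those sub-branches in which $\Phi$ itself is already a CBC (by the preceding lemma, $\sin\theta=\cos\phi=0$ or $\cos\theta=\sin\phi=0$), since $n(\Phi)=2$ forces $n(\Phi)\neq 1$. The main obstacle is this enumeration: the non-degeneracy inequalities in (i), (ii) and (iii) must match exactly the constraints needed both to force $\Phi^{2}$ to be a CBC and to rule out $\Phi$ itself being one, and the three cases must be checked to be disjoint and to jointly exhaust the remaining solutions of the system. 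Sufficiency in each listed case then follows by direct substitution back into the two vanishing conditions for $D'$.
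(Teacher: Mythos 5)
Your reduction is sound and your computation of $D'$ is correct: with $u=\cos\theta\cos\phi$, $v=\sin\theta\sin\phi$ one gets $D'=b\,(e^{\pm 2i\xi}u^{2}+v^{2})+2b^{*}e^{\pm i\xi}\cos\xi\,uv$, and the three real equations you extract, $\cos2\xi\,u^{2}+v^{2}=0$, $\sin2\xi\,u^{2}=0$, $\cos\xi\,uv=0$, are the right ones. The genuine gap is that you stop exactly where the work is: you defer the case analysis and then assert that "sufficiency in each listed case follows by direct substitution." It does not. Solving your system under the constraint $(u,v)\neq(0,0)$ (i.e.\ $n(\Phi)\neq 1$) is short and forced: $u=0$ would give $v=0$ from the first equation, so $u\neq0$; then $\sin2\xi=0$; then $\cos2\xi=1$ gives $u^{2}+v^{2}=0$, impossible, so $\cos2\xi=-1$, i.e.\ $\cos\xi=0$ and $v=\pm u\neq0$. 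That is precisely conditions (ii) and (iii) and nothing else. Condition (i) of the lemma, which has $\cos2\xi=0$ (hence $\sin2\xi=\pm1$) and $u\neq0$, violates your second equation; concretely, for $\phi=0$, $\cos\theta\neq0$, $\xi=\pi/4$ one finds $D'=e^{-2i\xi}\cos^{2}\theta\,b=-i\cos^{2}\theta\,b\neq0$, so $\Phi^{2}$ is not coherence-breaking there. Had you substituted case (i) back in, as you claim one can, you would have found a contradiction rather than a verification.

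For comparison, the paper works in the Bloch picture and derives condition (i) from the entries $M_{11}=\cos\xi\cos(\theta-\phi)$, $M_{12}=\sin\xi\cos(\theta+\phi)$, $M_{21}=-\sin\xi\cos(\theta-\phi)$, $M_{22}=-\cos\xi\cos(\theta+\phi)$, which make the $2\times2$ block proportional to a traceless matrix when $\cos2\xi=0$ and $\cos(\theta+\phi)=\cos(\theta-\phi)$. But the sign of $M_{22}$ there is inconsistent with the Kraus operators \eqref{eqNC2} as written: at $\theta=\phi=\xi=0$ the channel is the identity, while that formula gives $M_{22}=-1$. The correct sign $M_{22}=+\cos\xi\cos(\theta+\phi)$ reproduces your equations and again yields only (ii) and (iii). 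So your density-matrix route is a legitimate alternative to the paper's matrix-representation route and in fact exposes the discrepancy — but as a proof of the stated lemma your proposal is incomplete, and the missing step is not routine bookkeeping: carried out honestly, it shows case (i) should not be in the statement.
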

\begin{proof}
Comparing \eqref{eqnc2}, and \eqref{eqNC2} with \eqref{qudit}, we have that $n_x=n_y=M_{13}=M_{23}=M_{31}=M_{32}=0$, $n_z=\sin^2\phi-\sin^2\theta$,
 $M_{11}=\cos\xi\cos(\theta-\phi)$, $M_{12}=\sin\xi\cos(\theta+\phi)$,
 $M_{21}=-\sin\xi\cos(\theta-\phi)$, $M_{22}=-\cos\xi\cos(\theta+\phi)$ and $M_{33}=\cos^2\theta-\sin^2\phi$.
Thus, $n(\Phi)=2$ if and only if $M_{11}^2+M_{12}M_{21}=0$, $M_{12}(M_{11}+M_{22})=0$, $M_{21}(M_{11}+M_{22})=0$,
$M_{12}M_{21}+M_{22}^2=0$, $\sin\theta\cos\phi\neq0$ and $\cos\theta\sin\phi\neq0$,
which implies that one of the above three conditions holds.
\end{proof}

\section{Amending coherence-breaking channels}
In this section, we discuss the amendment of CBCs.
All the channels we discussed in this section is represented by $(M,\vec{n})$ with at least one of
the conditions $M\neq0$ and $\vec{n}\neq\vec{0}$ holds.
We show that a CBC $\Phi$ can be amended via unitary operations $\Lambda_\alpha$ through $\Lambda_\alpha\circ\Phi$.
For the case $n(\Phi)=2$, we show the channels can always be amended by unitary operations through $\Phi\circ\Lambda_\alpha\circ\Phi$
while $\Lambda_\alpha\circ\Phi^2$ may not always amend $\Phi^2$.

\begin{lem}
For a $d$-dimensional CBC $\Phi$, there always exists a unitary operation $\Lambda_\alpha$ that amends $\Phi$ by $\Lambda_\alpha\circ\Phi$.
\end{lem}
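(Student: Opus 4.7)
The goal is to exhibit a unitary channel $\Lambda_\alpha$ such that $\Lambda_\alpha\circ\Phi$ fails to be a CBC, i.e., it produces a state with nonzero $l_1$ coherence on at least one input. My plan proceeds in three steps.

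First, I would use Lemma \ref{qudits} together with the standing assumption that $M\neq 0$ or $\vec{n}\neq\vec{0}$ to locate a particular output state of $\Phi$ that is incoherent but distinct from the maximally mixed state $I/d$. If $\vec{n}\neq\vec{0}$, then $\Phi(I/d)$ corresponds to the coefficient vector $\vec{n}\neq\vec{0}$, so it is a diagonal state different from $I/d$. If instead $\vec{n}=\vec{0}$ but $M\neq 0$, I choose some $\vec{x}\notin\ker M$, and the corresponding input state $\rho=I/d+\vec{x}\cdot\vec{X}$ satisfies $\Phi(\rho)=I/d+(M\vec{x})\cdot\vec{X}$, again a diagonal state different from $I/d$ (diagonal because $\Phi$ is a CBC). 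In either case I obtain an input $\rho_0$ and an output $\sigma_*:=\Phi(\rho_0)=\sum_i p_i|i\rangle\langle i|$ whose diagonal entries are not all equal, so there exist indices $i\neq j$ with $p_i\neq p_j$.

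Second, I would construct an explicit rotation-type unitary acting non-trivially only in the two-dimensional subspace spanned by $\{|i\rangle,|j\rangle\}$: set $U|i\rangle=\cos\alpha|i\rangle+\sin\alpha|j\rangle$, $U|j\rangle=-\sin\alpha|i\rangle+\cos\alpha|j\rangle$, and $U|k\rangle=|k\rangle$ for all other $k$. Define $\Lambda_\alpha(\cdot)=U(\cdot)U^\dagger$. A direct calculation shows that the $(i,j)$ matrix entry of $\Lambda_\alpha(\sigma_*)$ equals $(p_i-p_j)\sin\alpha\cos\alpha$. Choosing $\alpha=\pi/4$ makes this entry nonzero, so $\mathcal{C}_{l_1}(\Lambda_\alpha(\sigma_*))>0$.

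Third, I would conclude: since $(\Lambda_\alpha\circ\Phi)(\rho_0)=\Lambda_\alpha(\sigma_*)$ is coherent, the composition $\Lambda_\alpha\circ\Phi$ is not a CBC, which is exactly what ``amendment'' means. The only subtle point is the very first step, namely excluding the degenerate case in which every output of $\Phi$ equals $I/d$, because then no unitary conjugation could ever generate coherence. This degenerate channel is precisely the completely depolarizing one ($M=0$ and $\vec{n}=\vec{0}$), and the standing assumption of the section is tailored to rule it out; once we are past it, the rest of the argument is essentially a single-qubit rotation in the $(i,j)$-block and is routine.
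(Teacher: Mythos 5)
Your proposal is correct, and it uses the same underlying mechanism as the paper's proof: a rotation in a two-dimensional subspace converts a population imbalance of the (necessarily diagonal) output into an off-diagonal element of size $(p_i-p_j)\sin\alpha\cos\alpha$. The difference lies in how the subspace is chosen. The paper fixes one universal block-diagonal unitary built from $2\times2$ rotations acting on the consecutive pairs $(0,1),(2,3),\dots$ (with an extra fixed level when $d$ is odd) and asserts that $\sin2\alpha\neq0$ suffices, whereas you first locate a concrete output $\sigma_*=\Phi(\rho_0)\neq I/d$ (using the standing assumption $M\neq0$ or $\vec n\neq\vec 0$) and then rotate only in the span of two levels $|i\rangle,|j\rangle$ with $p_i\neq p_j$. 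Your adaptive choice is actually the more robust of the two: for $d\geq3$ the paper's fixed pairing can fail, e.g.\ for the replacement channel sending every state to $\mathrm{diag}(1/2,1/2,0,0)$ in $d=4$, whose populations agree within each of the paper's blocks so that its $\Lambda_\alpha$ generates no coherence at all, while your targeted rotation still succeeds. You also make explicit the exclusion of the completely depolarizing channel, which the paper leaves implicit. Two minor points you could tighten: when $\vec n=\vec 0$ and $M\neq0$ you should note that $\vec x\notin\ker M$ can be rescaled so that $I/d+\vec x\cdot\vec X$ is a legitimate (positive semidefinite) state, and you might remark that any $\alpha$ with $\sin2\alpha\neq0$ works, not just $\alpha=\pi/4$; neither affects the validity of the argument.
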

\begin{proof}
If $\Phi$ is a CBC, we have $\Phi(\rho)=\rho^{\prime}=(\rho_{00}^{\prime},\ \rho_{11}^{\prime},\ldots,\ \rho_{d-1,d-1}^{\prime})^{T}\in\mathcal{I}$.
If $d$ is even, let
\begin{equation}
\Lambda_\alpha(\cdot)=\left(
                                     \begin{array}{cccc}
                                          \tilde{\Lambda}_\alpha &  0 & \cdots & 0 \\
                                          0 & \tilde{\Lambda}_\alpha & \cdots & 0\\
                                          \vdots & \vdots & \ddots & \vdots\\
                                          0 & 0 & 0 & \tilde{\Lambda}_\alpha \\
                                        \end{array}
                                      \right)(\cdot)\left(
                                     \begin{array}{cccc}
                                          \tilde{\Lambda}_\alpha^{T} &  0 & \cdots & 0 \\
                                          0 & \tilde{\Lambda}_\alpha^{T} & \cdots & 0\\
                                          \vdots & \vdots & \ddots & \vdots\\
                                          0 & 0 & 0 & \tilde{\Lambda}_\alpha^{T} \\
                                        \end{array}
                                      \right)
\end{equation}
be a unitary operation with $\tilde{\Lambda}_\alpha=\left(
                                     \begin{array}{cc}
                                          \cos\alpha & -\sin\alpha\\
                                          \sin\alpha & \cos\alpha\\
                                        \end{array}
                                      \right)$.
Then
$\Lambda_\alpha(\rho^{\prime})=diag(\Lambda_{\alpha_0}, \Lambda_{\alpha_2}, ..., \Lambda_{\alpha_{d-2}})$
with \\
$\Lambda_{\alpha_i}=\left(
                                     \begin{array}{cc}
                                          \rho_{ii}^{\prime}\cos^\alpha+\rho_{i+1,i+1}^{\prime}\sin^2\alpha & (\rho_{ii}^{\prime}-\rho_{i+1,i+1}^{\prime})\sin\alpha\cos\alpha\\
                                          (\rho_{ii}^{\prime}-\rho_{i+1,i+1}^{\prime})\sin\alpha\cos\alpha & \rho_{ii}^{\prime}\sin^\alpha+\rho_{i+1,i+1}^{\prime}\cos^2\alpha\\
                                        \end{array}
                                      \right)$, $i=0,1,...,d-2$.
Therefore, the channel $\Phi$ can be amended if $\sin2\alpha\neq0$.

If $d$ is odd, let
\begin{equation}
\Lambda_\alpha(\cdot)=\left(
                                     \begin{array}{ccccc}
                                          1 & 0 & 0 & \cdots & 0 \\
                                          0 & \tilde{\Lambda}_\alpha & 0 & \cdots &0 \\
                                          0 & 0 & \tilde{\Lambda}_\alpha & \cdots &0 \\
                                          \vdots & \vdots & \vdots & \ddots &\vdots \\
                                          0 & 0 & 0 & \cdots &\tilde{\Lambda}_\alpha \\
                                        \end{array}
                                      \right)(\cdot)\left(
                                     \begin{array}{ccccc}
                                          1 & 0 & 0 & \cdots & 0 \\
                                          0 & \tilde{\Lambda}_\alpha^T & 0 & \cdots &0 \\
                                          0 & 0 & \tilde{\Lambda}_\alpha^T & \cdots &0 \\
                                          \vdots & \vdots & \vdots & \ddots &\vdots \\
                                          0 & 0 & 0 & \cdots &\tilde{\Lambda}_\alpha^T \\
                                        \end{array}
                                      \right)
\end{equation}
 be a unitary operation. Then $\Lambda_\alpha(\rho^{\prime})=diag(\rho_{00}^{\prime},\Lambda_{\alpha_0},\Lambda_{\alpha_2},...,\Lambda_{\alpha_{d-2}})$.
Thus, the channel can be amended if $\sin2\alpha\neq0$.
\end{proof}

\begin{lem}\label{Amend}
Let $\Phi$ be a CBC defined by \eqref{eqnc1}.
There always exists a unitary operation $\Lambda_\alpha$ that amends $\Phi$ by $\Lambda_\alpha\circ\Phi$.
\end{lem}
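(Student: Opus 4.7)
The plan is to use Lemma \ref{Amend1} to reduce to the case $\cos\theta=0$, and then exhibit an explicit rotation unitary $\Lambda_\alpha$ (the $d=2$ analog of $\tilde\Lambda_\alpha$ from the previous lemma) such that $\Lambda_\alpha\circ\Phi$ fails to be coherence-breaking: it produces a state with nonzero off-diagonal entry on some input $\rho$.

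First, I would specialize the calculation already performed inside the proof of Lemma \ref{Amend1}. Substituting $\cos\theta=0$ (so $\sin\theta=\pm1$) into the expressions for $A$ and $B$ there, one obtains $\Phi(\rho)=\mathrm{diag}(A,1-A)$ with
\begin{equation*}
A=a\cos^2\phi+(1-a)\sin^2\phi+2\,\mathrm{Re}(be^{-i\xi})\sin\theta\sin\phi\cos\phi,
\end{equation*}
which is indeed diagonal (as it must be, since $\Phi$ is a CBC), and where $a,b$ are the entries of $\rho$ as in \eqref{eqState}.

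Next, I would take $\Lambda_\alpha(\cdot)=\tilde\Lambda_\alpha(\cdot)\tilde\Lambda_\alpha^{T}$ with $\tilde\Lambda_\alpha$ the $2\times 2$ rotation matrix used earlier. A direct matrix multiplication shows that the off-diagonal entry of $\Lambda_\alpha(\Phi(\rho))$ equals $\bigl(A-\tfrac12\bigr)\sin 2\alpha$, so it suffices to choose $\alpha$ with $\sin 2\alpha\neq 0$ and exhibit an input state $\rho$ with $A\neq\tfrac12$; this means $\Lambda_\alpha\circ\Phi$ maps $\rho$ to a coherent state, hence is not a CBC.

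The delicate step, which I expect to be the only obstacle, is handling the corner case $\cos^2\phi=\sin^2\phi=\tfrac12$. Outside this case one can take the incoherent input $\rho=|0\rangle\langle 0|$ and get $A=\cos^2\phi\neq\tfrac12$ immediately. In the corner case, the pure-diagonal contribution to $A$ collapses to $\tfrac12$ regardless of $a$, so I would instead pick a coherent input $\rho$ with $b=|b|e^{i\xi}$ for some $|b|\neq 0$, giving $A=\tfrac12+|b|\sin\theta$, which is different from $\tfrac12$ because $\sin\theta=\pm 1$. Combining the two cases and fixing any $\alpha$ with $\sin 2\alpha\neq 0$ yields the required $\Lambda_\alpha$.
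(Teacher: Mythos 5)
Your proof is correct and reaches the same essential conclusion as the paper --- a rotation with $\sin 2\alpha\neq 0$ amends $\Phi$ --- but by a more elementary, state-level route. The paper works entirely in the affine (Bloch) representation: it writes out the transfer matrix $N_{ij}$ of a \emph{general} qubit unitary, composes it with the $(M,\vec n)$ of the CBC (whose first two rows vanish by Lemma~\ref{qudits}), observes that the first two rows of the composition are $N_{13}M_{3j}$ and $N_{23}M_{3j}$, and concludes that $\Lambda_\alpha\circ\Phi$ fails to be a CBC exactly when $N_{13}\neq0$ or $N_{23}\neq0$, i.e.\ $\sin2\alpha\neq0$. You instead compute $\Phi(\rho)=\mathrm{diag}(A,1-A)$ directly from \eqref{eqnc1} with $\cos\theta=0$ and exhibit an explicit witness input with $A\neq\tfrac12$, so that the conjugated state has off-diagonal entry $(A-\tfrac12)\sin2\alpha\neq0$. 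Your version is actually more careful on a point the paper leaves implicit: for the paper's criterion to yield a non-CBC one also needs the third row of $M$ to be nonzero (otherwise $\Phi$ sends everything to $I/2$ and no post-unitary can help); your case split on $\cos^2\phi=\tfrac12$ versus $\cos^2\phi\neq\tfrac12$ supplies exactly this, consistent with the identity $M_{31}^2+M_{32}^2+M_{33}^2=\sin^2 2\phi+\cos^2 2\phi=1$ when $\cos\theta=0$. (In the corner case your $A$ should read $\tfrac12\pm|b|\sin\theta$ depending on the sign of $\sin2\phi$, which does not affect the conclusion; you should also fix a valid $a$, e.g.\ $a=\tfrac12$.) The trade-off is that the paper's representation-level computation characterizes \emph{all} amending unitaries at once, whereas yours establishes existence via one explicit family.
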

\begin{proof}
Any general unitary operation can be written as
\begin{equation*}
\Lambda_\alpha(\cdot)=\left(
                        \begin{array}{cc}
                          \cos\alpha & -e^{i\alpha_1}\sin\alpha \\
                          e^{i\alpha_2}\sin\alpha & e^{i\alpha_3}\cos\alpha \\
                        \end{array}
                      \right)(\cdot)\\
                      \left(
                        \begin{array}{cc}
                          \cos\alpha & e^{-i\alpha_2}\sin\alpha \\
                          -e^{-i\alpha_1}\sin\alpha & e^{-i\alpha_3}\cos\alpha \\
                        \end{array}
                      \right).
\end{equation*}
Comparing it with \eqref{qudit} for $d=2$,
we get
\begin{equation}
\Lambda_\alpha=\left(
\begin{array}{cccc}
1 & 0 & 0 & 0 \\
0 & N_{11} & N_{12} & N_{13} \\
 0 & N_{21} & N_{22} & N_{23} \\
 0 & N_{31} & N_{32} & N_{33} \\
 \end{array}
 \right),
\end{equation}
where $N_{11}=\cos^2\alpha\cos\alpha_3-\sin^2\alpha\cos(\alpha_1-\alpha_2)$, $N_{12}=\sin^2\alpha\sin(\alpha_1-\alpha_2)-\cos^2\alpha\sin\alpha_3$,
$N_{13}=\sin2\alpha\cos\alpha_2$, $N_{21}=\cos^2\alpha\sin\alpha_3+\sin^2\alpha\sin(\alpha_1-\alpha_2)$,
$N_{22}=\sin^2\alpha\cos(\alpha_1-\alpha_2)+\cos^2\alpha\cos\alpha_3$, $N_{23}=\sin2\alpha\sin\alpha_2$,
$N_{31}=-\sin2\alpha\cos\alpha_1$, $N_{32}=\sin2\alpha\sin\alpha_1$
and $N_{33}=\cos2\alpha$.
Then
\begin{equation}
\Lambda_\alpha\circ\Phi=\left(
\begin{array}{cccc}
1 & 0 & 0 & 0 \\
0 & N_{13}M_{31} & N_{13}M_{32} & N_{13}M_{33} \\
 0 & N_{23}M_{31} & N_{23}M_{32} & N_{23}M_{33} \\
 0 & N_{33}M_{31} & N_{33}M_{32} & N_{33}M_{33} \\
 \end{array}
 \right),
\end{equation}
where $M_{ij}$ are defined in Lemma \ref{Amend1} with $i,j=1,2,3$.
Thus, $\Lambda\circ\Phi$ is not a CBC if and only if $N_{13}\neq0$ or $N_{23}\neq0$, namely, $\sin2\alpha\neq0$.
In other words, $\Lambda$ can amend $\Phi$ if and only if $\sin2\alpha\neq0$.
\end{proof}

Similarly, we can prove the following results:

\begin{lem}
Let $\Phi$ be a CBC defined by \eqref{eqnc2}.
There always exists a unitary operation $\Lambda_\alpha$ that amends $\Phi^2$ by $\Lambda_\alpha\circ\Phi$.
\end{lem}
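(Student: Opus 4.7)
The plan is to mirror the strategy of Lemma~\ref{Amend} almost verbatim, exploiting the very rigid form that a CBC of type \eqref{eqnc2} must take. First I would recall the classification (proved just before Lemma~\ref{2(Phi)}): $\Phi$ from \eqref{eqnc2} is a CBC iff $\sin\theta=0,\ \cos\phi=0$ or $\cos\theta=0,\ \sin\phi=0$. Substituting either set of conditions into the affine-representation formulas derived inside the proof of Lemma~\ref{CBC2}, one reads off $M_{11}=M_{12}=M_{21}=M_{22}=0$ (as in Lemma~\ref{CBC2}) and, crucially, $M_{33}=\cos^{2}\theta-\sin^{2}\phi=0$, while $n_z=\sin^{2}\phi-\sin^{2}\theta=\pm 1$. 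Combined with the entries $M_{13}=M_{23}=M_{31}=M_{32}=0$ already known from Lemma~\ref{CBC2}, this shows that the whole $3\times 3$ block $M$ of $\Phi$ is zero and the only nonzero part of the affine matrix is the entry $n_z=\pm 1$. This matches the fact that such a $\Phi$ collapses every input state onto $|0\rangle\langle 0|$ or $|1\rangle\langle 1|$.

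Next, I would reuse the general qubit unitary parametrization $\Lambda_\alpha$ introduced in the proof of Lemma~\ref{Amend}, with the same entries $N_{ij}$ expressed there in terms of $\alpha,\alpha_1,\alpha_2,\alpha_3$. Multiplying the $4\times 4$ affine matrix of $\Lambda_\alpha$ on the left of the reduced form of $\Phi$ picks out only the third column of $\Lambda_\alpha$ times $n_z$, giving
\[
\Lambda_\alpha\circ\Phi=\begin{pmatrix}1&0&0&0\\ N_{13}n_z&0&0&0\\ N_{23}n_z&0&0&0\\ N_{33}n_z&0&0&0\end{pmatrix},
\]
so the $3\times 3$ block remains zero and the new translation vector is $(N_{13}n_z,\,N_{23}n_z,\,N_{33}n_z)^T$.

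Finally, I would invoke Lemma~\ref{qudits} with $d=2$: the composition fails to be a CBC precisely when at least one of $N_{13}n_z,\,N_{23}n_z$ is nonzero. Since $|n_z|=1$, this condition reduces to $N_{13}\neq 0$ or $N_{23}\neq 0$, i.e.\ $\sin 2\alpha\neq 0$. Any such $\alpha$ therefore provides the desired unitary amendment, matching the analogous conclusion of Lemma~\ref{Amend}.

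The main obstacle — really the only step requiring genuine care — is the opening case analysis: verifying that in both CBC sub-cases not only the off-diagonal $M_{ij}$ but also $M_{33}$ vanish identically, which is what trivializes the $3\times 3$ block of $\Lambda_\alpha\circ\Phi$ and forces the non-CBC criterion to land on the translation column. Once that reduction is in hand the remainder of the argument is pure bookkeeping, and the assumption $(M,\vec{n})\neq(0,\vec{0})$ stated at the start of Section~4 is automatically satisfied here since $|n_z|=1$.
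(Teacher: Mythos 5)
Your argument is correct and is exactly the computation the paper leaves implicit (the paper states this lemma with no proof beyond ``similarly to Lemma~\ref{Amend}''): in both CBC sub-cases of \eqref{eqnc2} one indeed gets $M=0$ and $n_z=\pm1$, so the non-CBC witness for $\Lambda_\alpha\circ\Phi$ lands on the translation entries $N_{13}n_z,\ N_{23}n_z$ and the condition reduces to $\sin2\alpha\neq0$, matching Lemma~\ref{Amend}. No gaps.
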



\begin{lem}\label{amend1}
Let $\Phi$ be a CBC defined by \eqref{eqnc1} with $n(\Phi)=2$.
There always exists a unitary operation $\Lambda_\alpha$ that amends $\Phi^2$ by $\Phi\circ\Lambda_\alpha\circ\Phi$.
\end{lem}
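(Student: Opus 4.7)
The plan is to first invoke Lemma~\ref{2(Phi)} to reduce the hypothesis $n(\Phi)=2$ to the parameter constraints $\sin\theta=0$, $\cos 2\phi=0$, and $\cos(\xi-\eta)=0$. Substituting these into the matrix entries for $M$ and $\vec{n}$ computed in the proof of Lemma~\ref{2(Phi)}, one should find $\vec{n}=\vec{0}$ together with the rank-one factorization
\begin{equation*}
M=\cos\theta\, uv^{T},\qquad u=(\cos\eta,-\sin\eta,0)^{T},\qquad v=(\cos\xi,-\sin\xi,0)^{T}.
\end{equation*}
The identity $v^{T}u=\cos(\xi-\eta)=0$ is then precisely the reason that $M^{2}=0$ and hence $\Phi^{2}$ is a CBC.

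Next, let $N$ denote the $3\times 3$ block of an arbitrary unitary $\Lambda_\alpha$ parameterized as in Lemma~\ref{Amend}. The factorization above yields the key identity
\begin{equation*}
MNM=\cos^{2}\theta\,(v^{T}Nu)\,uv^{T}.
\end{equation*}
Since $u_{1}^{2}+u_{2}^{2}=1$ and $v\neq\vec{0}$, the first two rows of $MNM$ vanish iff the scalar $v^{T}Nu=0$; combined with $\vec{n}=\vec{0}$ (which makes the affine part of $\Phi\circ\Lambda_\alpha\circ\Phi$ automatically zero), Lemma~\ref{qudits} shows that $\Phi\circ\Lambda_\alpha\circ\Phi$ fails to be a CBC exactly when $v^{T}Nu\neq 0$. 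Thus amending $\Phi^{2}$ reduces to producing a single unitary with $v^{T}Nu\neq 0$.

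To close the argument I would exhibit an explicit witness: take $\alpha=0$ and $\alpha_{3}=\pi/2$ in the parameterization of Lemma~\ref{Amend}, so that $\Lambda_\alpha$ acts by conjugation with $\mathrm{diag}(1,i)$ and the corresponding $N$ is the rotation of the Bloch sphere by $\pi/2$ about the $z$-axis. A short trigonometric computation then gives $v^{T}Nu=\cos\bigl(\pi/2+(\xi-\eta)\bigr)=\pm 1\neq 0$, where I have used $\cos(\xi-\eta)=0$ to conclude that the argument is an integer multiple of $\pi$. I expect the only real hurdle to be the bookkeeping that extracts the clean rank-one factorization $M=\cos\theta\,uv^{T}$ from the explicit matrix entries in the proof of Lemma~\ref{2(Phi)}; once that factorization and the sparsity pattern of $M$ are in hand, the remainder of the proof is a one-step linear-algebra reduction plus the choice of witness.
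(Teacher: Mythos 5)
Your proposal is correct, and it ultimately produces the same witness family as the paper: the paper's own proof sets $\sin\alpha=0$ and demands $\sin\alpha_3\neq0$, i.e.\ conjugation by a diagonal phase unitary, of which your $\mathrm{diag}(1,i)$ is the special case $\alpha_3=\pi/2$. Where you differ is in how the verification is organized. The paper writes out all six entries $\tilde M_{ij}$ of $\Phi\circ\Lambda_\alpha\circ\Phi$ by brute force and then checks when the first two rows vanish under $\sin\alpha=0$; you instead extract the rank-one structure $M=\cos\theta\,uv^{T}$ with $v^{T}u=\cos(\xi-\eta)=0$ (which is exactly the reason $M^{2}=0$), so the whole non-CBC condition collapses, via Lemma~\ref{qudits} and $\vec n=\vec 0$, to the single scalar inequality $v^{T}Nu\neq 0$. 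I checked the factorization against the entries in Lemma~\ref{2(Phi)}: with $\sin\theta=\cos2\phi=0$ one gets $M_{11}=\cos\theta\cos\eta\cos\xi$, $M_{12}=-\cos\theta\cos\eta\sin\xi$, $M_{21}=-\cos\theta\sin\eta\cos\xi$, $M_{22}=\cos\theta\sin\eta\sin\xi$ and all other entries zero, which is indeed $\cos\theta\,uv^{T}$; moreover $\cos\theta=\pm1$ so the prefactor $\cos^{2}\theta$ is harmless, and your witness evaluation $v^{T}Nu=-\sin(\xi-\eta)=\pm1$ is right. Your route buys two things the paper's computation leaves implicit: a complete characterization of the amending unitaries (precisely those whose Bloch rotation $N$ does not send $u$ into the orthogonal complement of $v$), and a one-line check of the witness instead of an entrywise case analysis.
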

\begin{proof}
It is direct to see that
\begin{equation}
\Phi\circ\Lambda_\alpha\circ\Phi=\left(
\begin{array}{cccc}
1& 0 & 0 & 0 \\
0 & \tilde{M}_{11} & \tilde{M}_{12} &  0\\
0 & \tilde{M}_{21} & \tilde{M}_{22} & 0 \\
0 & \tilde{M}_{31} & \tilde{M}_{32} & 0 \\
  \end{array}
                          \right),
\end{equation}
where
$\tilde{M}_{11}=M_{11}(M_{11}N_{11}+M_{12}N_{21})+M_{21}(M_{11}N_{12}+M_{12}N_{22})+M_{31}(M_{11}N_{13}+M_{12}N_{23})$,
$\tilde{M}_{12}=M_{12}(M_{11}N_{11}+M_{12}N_{21})+M_{22}(M_{11}N_{12}+M_{12}N_{22})+M_{32}(M_{11}N_{13}+M_{12}N_{23})$,
$\tilde{M}_{21}=M_{11}(M_{21}N_{11}+M_{22}N_{21})+M_{21}(M_{21}N_{12}+M_{22}N_{22})+M_{31}(M_{21}N_{13}+M_{22}N_{23})$,
$\tilde{M}_{22}=M_{12}(M_{21}N_{11}+M_{22}N_{21})+M_{22}(M_{21}N_{12}+M_{22}N_{22})+M_{32}(M_{21}N_{13}+M_{22}N_{23})$,
$\tilde{M}_{31}=M_{11}(M_{31}N_{11}+M_{32}N_{21})+M_{21}(M_{31}N_{12}+M_{32}N_{22})+M_{31}(M_{31}N_{13}+M_{32}N_{23})$,
$\tilde{M}_{32}=M_{12}(M_{31}N_{11}+M_{32}N_{21})+M_{22}(M_{31}N_{12}+M_{32}N_{22})+M_{32}(M_{31}N_{13}+M_{32}N_{23})$,
and $M_{ij}$ and $N_{ij}$ are defined in Lemmas \ref{2(Phi)} and \ref{Amend}, respectively.

Assume that $\sin\alpha=0$. We have $\tilde{M}_{ij}=0$ if and only if $\sin\alpha_3=0$, where $i=1,2$ and $j=1,2,3$.
Thus, $\Phi^2$ is amended if $\sin\alpha=0$ and $\sin\alpha_3\neq0$.
\end{proof}

\begin{lem}
Let $\Phi$ be a CBC defined by \eqref{eqnc2} and $n(\Phi)=2$.
There always exists a unitary operation $\Lambda_\alpha$ that amends $\Phi^2$ by $\Phi\circ\Lambda_\alpha\circ\Phi$.
\end{lem}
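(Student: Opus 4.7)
The plan mirrors the strategy of Lemma~\ref{amend1}, combining the matrix data from Lemma~\ref{CBC2} with the parametrization of $\Lambda_\alpha$ from Lemma~\ref{Amend}. By Lemma~\ref{CBC2}, for an incoherent channel $\Phi$ of the form \eqref{eqnc2} with $n(\Phi)=2$, the representation $(M,\vec{n})$ satisfies $n_x=n_y=M_{13}=M_{23}=M_{31}=M_{32}=0$, with $M_{11},M_{12},M_{21},M_{22}$ and $M_{33}$ given explicitly. Hence the nontrivial $3\times 3$ block of $M$ decomposes into a $2\times 2$ block acting on the $(x,y)$-sector plus the scalar $M_{33}$. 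Writing $N$ for the $3\times 3$ block of $\Lambda_\alpha$, this block structure forces the third column of $\tilde{M}:=MNM$ to vanish identically, so $\Phi\circ\Lambda_\alpha\circ\Phi$ is a CBC iff the $2\times 2$ block $(\tilde{M}_{ij})_{i,j\in\{1,2\}}$ is zero, and the task reduces to exhibiting a $\Lambda_\alpha$ making some entry of this block nonzero.

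The natural trial is the diagonal-phase choice $\alpha=0$ with $\alpha_3$ free, under which $N$ restricts on the $(x,y)$-sector to a rotation with $N_{11}=N_{22}=\cos\alpha_3$, $N_{12}=-\sin\alpha_3$, $N_{21}=\sin\alpha_3$. Plugging this in together with the parameter formulas from Lemma~\ref{CBC2}, one would verify each of the three subcases in turn. In case~(i), where $M_{11}+M_{22}=0$ and $\cos2\xi=0$, a short computation gives $\tilde{M}_{11}=\cos^{2}(\theta-\phi)\,\cos(\alpha_3-2\xi)$, which at $\alpha_3=\pi/2$ is a nonzero multiple of $\sin2\xi$. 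In case~(ii) only $M_{12}$ survives and $\tilde{M}_{12}$ collapses to $M_{12}^{2}\sin\alpha_3$; in case~(iii) only $M_{21}$ survives and $\tilde{M}_{21}$ collapses to $-M_{21}^{2}\sin\alpha_3$. Thus any $\alpha_3$ with $\sin\alpha_3\neq 0$ amends $\Phi^2$ uniformly across the three cases.

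The main obstacle is not conceptual but organizational: one must cleanly juggle the three subcases of Lemma~\ref{CBC2} and the sign/phase data coming from $\cos\xi=0$ versus $\cos2\xi=0$, and be sure that no degeneracy in the allowed ranges of $\theta,\phi,\xi$ causes the candidate $\tilde{M}_{ij}$ entry to vanish. Should the diagonal-phase choice fail to be uniform, one falls back on tailoring $(\alpha,\alpha_1,\alpha_2,\alpha_3)$ per case; but the calculation above indicates the simplest possible unitary already suffices, and the argument then reduces to an entry-by-entry check entirely parallel to that of Lemma~\ref{amend1}.
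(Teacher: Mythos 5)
Your proposal is correct and follows essentially the same route as the paper: compute the matrix representation of $\Phi\circ\Lambda_\alpha\circ\Phi$ from the data of Lemmas \ref{CBC2} and \ref{Amend}, observe the resulting block structure, and check the three subcases, with the diagonal-phase unitary ($\alpha=0$, $\sin\alpha_3\neq0$) doing the job in each. Your case-(i) computation $\tilde{M}_{11}=\cos^2(\theta-\phi)\cos(\alpha_3-2\xi)$ is in fact slightly more explicit than the paper's, which merely asserts the equivalence with $\sin\alpha_3\neq 0$; in cases (ii) and (iii) the paper states the amendment condition as $N_{21}\neq0$ or $N_{23}\neq0$ (resp.\ $N_{12}\neq0$ or $N_{13}\neq0$), which your specific choice satisfies.
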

\begin{proof}
Similar to Lemma \ref{amend1}, we obtain
\begin{equation}
\Phi\circ\Lambda_\alpha\circ\Phi=\left(
\begin{array}{cccc}
1& 0 & 0 & 0 \\
\tilde{n}_x & \tilde{M}_{11} & \tilde{M}_{12} &  0\\
\tilde{n}_y & \tilde{M}_{21} & \tilde{M}_{22} & 0 \\
\tilde{n}_z & \tilde{M}_{31} & \tilde{M}_{32} & 0 \\
  \end{array}
                          \right),
\end{equation}
where $\tilde{n}_x=n_z(M_{11}N_{13}+M_{12}N_{23})$, $\tilde{n}_y=n_z(M_{21}N_{13}+M_{22}N_{23})$, $\tilde{n}_z=n_zM_{33}N_{33}$,
$\tilde{M}_{11}=M_{11}(M_{11}N_{11}+M_{12}N_{21})+M_{21}(M_{11}N_{12}+M_{12}N_{22})$,
$\tilde{M}_{12}=M_{12}(M_{11}N_{11}+M_{12}N_{21})+M_{22}(M_{11}N_{12}+M_{12}N_{22})$,
$\tilde{M}_{13}=M_{33}(M_{11}N_{13}+M_{12}N_{23})$,
$\tilde{M}_{21}=M_{11}(M_{21}N_{11}+M_{22}N_{21})+M_{21}(M_{21}N_{12}+M_{22}N_{22})$,
$\tilde{M}_{22}=M_{12}(M_{21}N_{11}+M_{22}N_{21})+M_{22}(M_{21}N_{12}+M_{22}N_{22})$,
$\tilde{M}_{23}=M_{33}(M_{21}N_{13}+M_{22}N_{23})$,
$\tilde{M}_{31}=M_{11}M_{33}N_{31}+M_{21}M_{33}N_{32}$,
$\tilde{M}_{32}=M_{12}M_{33}N_{31}+M_{22}M_{33}N_{32}$, $\tilde{M}_{33}=M_{33}^2N_{33}$,
and $M_{ij},\ N_{ij}$ are given in Lemmas \ref{CBC2} and \ref{Amend}, respectively.

{\rm (i)} $\cos2\xi=0$ and $\cos(\theta+\phi)=\cos(\theta-\phi)\neq0$.

Assume  $\sin\alpha=0$. Then $\tilde{M}_{ij}=0$ if and only if $\sin\alpha_3=0$ for $i=1,2$ and $j=1,2,3$.
Thus, $\Phi^2$ is amended if $\sin\alpha=0$ and $\sin\alpha_3\neq0$.

{\rm (ii)} $\cos(\theta-\phi)=0, \ \cos\xi=0$ and $\cos(\theta+\phi)\neq0$.

In this case,
\begin{equation}
\Phi\circ\Lambda_\alpha\circ\Phi=\left(
\begin{array}{cccc}
1& 0 & 0 & 0 \\
\tilde{n}_x & 0 &M^2_{12}N_{21} &  M_{12}M_{33}N_{23}\\
\tilde{n}_y & 0 & 0 & 0 \\
\tilde{n}_z & 0 & 0 & 0 \\
  \end{array}
                          \right).
\end{equation}
Hence, $\Phi\circ\Lambda\circ\Phi$ is not a CBC if and only if $N_{21}\neq0$ or $N_{23}\neq0$.

{\rm (iii)} $\cos(\theta+\phi)=0, \ \cos\xi=0$ and $\cos(\theta-\phi)\neq0$.

Similar to {\rm (ii)}, for $N_{12}\neq0$ or $N_{13}\neq0$, the channel $\Phi$ can be amended.
\end{proof}

Although a CBC $\Phi$ can always be amended via unitary operations by $\Lambda_\alpha\circ\Phi$ for the case $n(\Phi)=1$,
there may exist $\Phi$ that cannot be amended by $\Lambda_\alpha\circ\Phi^2$ for the case $n(\Phi)=2$.
Authors in \cite{error} pointed out that in some case, channel recovery, which is a duality quantum operation described in
\cite{Long4,Long5}, can be done via a linear combination of unitary operations.
However, in the case $n(\Phi)=2$, $\Phi$ also can not be amended by the linear combination of unitary operations through $L_c\circ\Phi^2$
with $L_c=\sum\limits_{\alpha}c_\alpha\Lambda_\alpha$,
if $\Phi$ cannot be amended by $\Lambda_\alpha\circ\Phi^2$,
 where $c_i$ is a complex number and $U_i$ are all unitary,
and $c_\alpha^{ \ ,}s$ satisfy $\sum\limits_{\alpha}|c_\alpha|\leq1$.

Now we give examples to illustrate our results about the coherence-breaking channel's amendment.

$\mathbf{Example\  4.1}$
Consider an incoherent qubit quantum channel $\Phi$ characterized by $(M,\vec{n})$ with
$M=\left(
   \begin{array}{ccc}
     0 & 0 & 0 \\
     0 & 0 & 0 \\
     0 & 0 & \mu \\
   \end{array}
 \right)$, where $\mu$ is a  real number and $\vec{n}=(0,0,0)^T$.
 For $|\mu|\leq1$, $\Phi$ is an incoherent channel \cite{MBR,CK}.
 It is easy to see that $\Phi$ is a CBC.
 Consider unitary operation $\Lambda_\alpha(\cdot)=\left(
                        \begin{array}{cc}
                          \cos\alpha & -\sin\alpha \\
                          \sin\alpha & \cos\alpha \\
                        \end{array}
                      \right)(\cdot)
                      \left(
                        \begin{array}{cc}
                         \cos\alpha & \sin\alpha \\
                          -\sin\alpha & \cos\alpha \\
                        \end{array}
                      \right)$
 with $\sin2\alpha\neq0$.
We obtain that $\Lambda_\alpha\circ\Phi$ is not a CBC, i.e., the channel $\Phi$ is amended.

$\mathbf{Example \ 4.2}$
Consider an incoherent qubit quantum channel $\Phi$ characterized by $(M,\vec{n})$ with
$M=\left(
   \begin{array}{ccc}
     0 & \gamma & 0 \\
     0 & 0 & 0 \\
     0 & 0 & 0 \\
   \end{array}
 \right)$, where $\gamma$ is a  real number and $\vec{n}=(0,0,0)^T$.
 For $|\gamma|\leq1$, $\Phi$ is an incoherent channel \cite{MBR,CK}.
 It is easy to see that $n(\Phi)=2$.
Consider unitary operation $\Lambda_\alpha(\cdot)=\left(
                        \begin{array}{cc}
                          1 & 0 \\
                          0 & e^{i\alpha_3} \\
                        \end{array}
                      \right)(\cdot)\
                      \left(
                        \begin{array}{cc}
                          1 & 0 \\
                          0 & e^{-i\alpha_3}\\
                        \end{array}
                      \right)$
 with $\sin\alpha_3\neq0$.
One finds that $\Phi\circ\Lambda_\alpha\circ\Phi$ is not a CBC, i.e., the channel is amended.
But the channel cannot be amended by $\Lambda_\alpha\circ\Phi^2$ via unitary operations since $\Lambda_\alpha\circ\Phi^2(\rho)=\frac{I}{2}$.

\section{Conclusions}
We have discussed the structure of CBCs and their amendments.
Particularly, for qubit CBCs, we have given the specific expressions for the case of $n(\Phi)=1,2$ and shown that the channels can be amended via
unitary operations $\Lambda_\alpha$ by $\Lambda_\alpha\circ\Phi$ and $\Phi\circ\Lambda_\alpha\circ\Phi$, respectively.
For $n(\Phi)\geq3$, following similar discussions, there may also exist unitary operations to amend the channel $\Phi$ by $\underbrace{\Phi\circ\Lambda_\alpha\circ\Phi\circ\cdots\circ\Phi\circ\Lambda_\alpha\circ\Phi}\limits_{n \  \Phi}$.
In addition, for a $d$-dimensional quantum channel $\Phi$ with $n(\Phi)\geq 2$, one may conjecture that the channel $\Phi$ can also be amended by
$\underbrace{\Phi\circ\Lambda_\alpha\circ\Phi\circ\cdots\circ\Phi\circ\Lambda_\alpha\circ\Phi}\limits_{n \  \Phi}$
with proper unitary operations $\Lambda_\alpha$.

\section{Acknowledgements}
This work is supported by the NSFC  (11675113) and the Research Foundation for Youth Scholars of Beijing Technology and Business University
 (QNJJ2017-03) and the Scientific Research General Program of Beijing Municipal Commission of Education  (Grant No. KM201810011009).

\end{document}